 \theoremstyle{plain}
 \newtheorem{thm}{Theorem}[section]
 \newtheorem{lem}[thm]{Lemma}
 \theoremstyle{definition}
 \newtheorem{definition}[thm]{Definition}
 \theoremstyle{remark}
 \newtheorem{remark}[thm]{Remark}
 \let\b=\beta
 \let\e=\varepsilon
 \let\lam=\lambda
 \let\f=\frac
 \let\om=\omega
 \let\wt=\widetilde
 \let\td = \tilde
 \let\teq \triangleq
 \def\la{\langle}
 \def\ra{\rangle}
 \def\lt{\left}
 \def\rt{\right}
 \def\B{\Big}
 \newcommand{\beq}{\begin{equation}}
 \newcommand{\eeq}{\end{equation}}
  \newcommand{\bal}{\begin{aligned} }
  \newcommand{\eal}{\end{aligned}}
 \newcommand{\ben}{\begin{eqnarray}}
 \newcommand{\een}{\end{eqnarray}}
 \newcommand{\beno}{\begin{eqnarray*}}
 \newcommand{\eeno}{\end{eqnarray*}}
   \date{\today}
    \author{Jiajie Chen \footnote{ School of Mathematical Sciences, Peking University; Email: cjiajie@pku.edu.cn}, Anthony Hou \footnote{Department of Statistics, Harvard University; Email: ahou@college.harvard.edu}, Thomas Y. Hou \footnote{Applied and Computational Mathematics, Caltech; Email: hou@cms.caltech.edu}}
    {\color{blue}\title{A Prototype Knockoff Filter for Group Selection with FDR Control}}
\begin{document}
    \maketitle{}
 \vspace{-0.2in}
\begin{abstract}
In many applications, we need to study a linear regression model that consists of a response variable and a large number of potential explanatory variables and determine which variables are truly associated with the response. In \cite{Candes}, the authors introduced a new variable selection procedure called the knockoff filter to control the false discovery rate (FDR) and proved that this method achieves exact FDR control. In this paper, we propose a prototype knockoff filter for group selection by extending the Reid-Tibshirani prototype method \cite{Tib}. Our prototype knockoff filter improves the computational efficiency and statistical power of the Reid-Tibshirani prototype method when it is applied for group selection. 
In some cases when the group features are spanned by one or a few hidden factors, we demonstrate that the PCA prototype knockoff filter outperforms the Dai-Barber group knockoff filter \cite{Barber}.
 We present several numerical experiments to compare our prototype knockoff filter with the Reid-Tibshirani prototype method and the group knockoff filter. We have also conducted some analysis of the knockoff filter. Our analysis reveals that some knockoff path method statistics, including the Lasso path statistic, may lead to loss of power for certain design matrices and a specially designed response even if their signal strengths are still relatively strong.
    \end{abstract}
    \section{Introduction}

 In many scientific endeavors, we need to determine from a response variable together with a large number of potential explanatory variables which variables are truly associated with the response. In order for this study to be meaningful, we need to make sure that the discoveries are indeed true and replicable.  Thus it is highly desirable to obtain exact control of the false discovery rate (FDR) within a certain prescribed level. In \cite{Candes}, Barber and Cand\`es introduced a new variable selection procedure called the knockoff filter to  control the FDR for a linear model. This method achieves exact FDR control in finite sample settings and does not require any knowledge of the noise level. A key observation is that by constructing knockoff variables that mimic the correlation structure found within the existing variables one can obtain accurate FDR control. The method is very general and flexible. It can be applied to a number of statistics and has more statistical power (the proportion of true signals being discovered) than existing selection rules in some cases.

 \subsection{A brief review of the knockoff filter}\label{sec:rev}
Throughout this paper, we consider the following linear regression model $y=X\beta+ \epsilon$, where the feature matrix $X$ is an $n\times p$ ($ n \geq 2 p$) matrix with full rank, its columns are normalized to be unit vectors in the $l^2$ norm, and $\epsilon$ is a Gaussian noise $\e \sim N(0,\sigma^2 I_n)$. We first provide a brief overview of the knockoff filter introduced in \cite{Candes}. 
The knockoff filter begins with the construction of a knockoff matrix $\tilde{X}$ that obeys
   \begin{equation}\label{revko1}
      \tilde{X}^T\tilde{X} = X^TX, \quad \tilde{X}^T X = X^T X - diag(s) ,
   \end{equation}
where $s_i \in [0,1]$. The positive definiteness of the Gram matrix $[X \tilde{X}]^T[X \tilde{X}]$ requires
   \begin{equation}\label{revko2}
   diag(s) \preceq 2X^TX.
   \end{equation}

The first condition in (\ref{revko1}) ensures that $\tilde{X}$ has the same covariance structure as the original feature matrix $X$. The second condition in (\ref{revko1}) guarantees that the correlations between distinct original and knockoff variables are the same as those between the originals. To ensure that the method has good statistical power to detect signals, we should choose $s_j$ as large as possible to maximize the difference between $X_j$ and its knockoff $\tilde{X}_j$. Once $diag(s)$ is obtained, $\td{X}$ can be constructed in terms of $X, diag(s)$ and an orthonormal matrix $U\in R^{n \times p}$ with $U^TX=0$. The existence of $U$ requires $n \geq 2p$. The next step is to calculate a statistic, $W_j$, for each pair $X_j,\tilde{X}_j$ using the Gram matrix $[X \ \tilde{X}]^T [X \tilde{X}]$ and the marginal correlation $[X \ \tilde{X}]^T y$. In addition, $W_j$ satisfies a flip-coin property, which implies that swapping arbitrary pair $X_j, \td{X}_j$ only changes the sign of $W_j$ but keeps the sign of other $W_i$ ($i \neq j$)  unchanged. The construction of the knockoff features and the symmetry of the test statistic are important to achieve a crucial property that the signs of the $W_j$'s are i.i.d. random for the null hypotheses. This property plays a crucial role in obtaining exact FDR control by using a supermartingale argument.

The final step is to run the knockoff (knockoff+) selection procedure  at the target FDR level $q$. A large positive $W_j$ gives evidence that variable $j$ is a nonnull. 
Then select the model $\hat{S} \triangleq \{j: W_j \geq T  \}$ using a data-dependent threshold $T$ defined below:
  \begin{equation}\label{cex5}
   \begin{aligned}
 &T_i \triangleq   \min \left\{ t>0 :        \frac{ i+  \#\{   j: W_j   \leq -t              \}  }{      \#\{   j: W_j   \geq t\}   \vee 1   }  \leq q  \right\}  ,\quad i = 0, 1 \; . \\
   \end{aligned}
\end{equation}
$T_0$ and  $T_1$ are used in the knockoff and knockoff+ selection procedure, respectively.

There are several ways to construct a statistic $W_j$. Among them, the Lasso path statistic is discussed in detail in \cite{Candes}. It first fits a Lasso regression of $y$ on $[X \tilde{X}]$ for a list of regularizing parameters $\lambda$ in a descending order and then calculates the first $\lambda$ at which a variable enters the model, i.e. $Z_j \triangleq \sup\{ \lambda: \hat{\beta}_j(\lambda) \ne 0 \}$ for feature $X_j$ and   $\tilde{Z}_j = \sup\{ \lambda  :\tilde{\beta}_j(\lambda) \ne 0 \}$ for its knockoff $\tilde{X}_j$. The Lasso path statistic is defined as $W_j = \max(Z_j , \tilde{Z}_j) \cdot sign(Z_j - \tilde{Z}_j)$. If $X_j$ is a nonnull, it has a non-trivial effect on $y$ and should enter the model earlier than its knockoff $\td{X}_j$, resulting in a large positive $W_j$. In this case, the corresponding feature is likely to be selected by the knockoff filter \eqref{cex5}.  If $X_j $ is a null, it is likely that $X_j$ enters the model later, resulting in a small positive or negative $W_j$. The corresponding feature is likely to be rejected according to \eqref{cex5}.

The main result in \cite{Candes} is that the knockoff procedure and knockoff+ procedure has exact control of mFDR and FDR respectively,
\[
mFDR \triangleq E \left[   \frac{   \# \{ j \in \hat{S}: \beta_j =0 \} }{ \# \{ j \in \hat{S}\} + q^{-1} }  \right]  \leq q\;, \quad FDR \triangleq E \left[   \frac{   \# \{ j \in \hat{S}: \beta_j =0 \} }{ \# \{ j \in \hat{S}\} \vee 1 }  \right]  \leq q\; .
\]

In a subsequent paper \cite{Candes2}, Barber and Cand\`es developed a framework for high-dimensional linear model with $p \ge n$.  In this framework, the observations are split into two groups, where the first group is used to screen for a set of potentially relevant variables, whereas the second is used for inference over this reduced set of variables. The authors also developed strategies for leveraging information from the first part of the data at the inference step for greater power. They proved that this procedure controls the directional false discovery rate (FDR) in the reduced model controlling for all screened variables. 

The knockoff filter has been further generalized to the model-free framework in \cite{Candes3}. Whereas the knockoffs procedure is constrained to linear models, 
the model-X knockoffs provide valid inference from finite samples in settings in which the conditional distribution of the response is arbitrary and completely unknown. Furthermore, this holds independent of the number of covariates. They achieved correct inference by constructing knockoff variables probabilistically instead of geometrically. The rigorous FDR control of the model-free knockoffs procedure is also established.

The research development for the knockoff filter has inspired a number of follow-up works, see e.g. \cite{Tib,Barber,Su1,Su2}.

\subsection{A prototype knockoff filter for group selection}
Group selection is an effective way to perform statistical inference when features within each group are highly correlated but the correlation among different groups is relatively weak. Inspired by the prototype method developed by Reid and Tibshirani in \cite{Tib}, we propose a prototype knockoff filter for group selection that has exact group FDR control (defined in Theorem \ref{ptfdr}) for strongly correlated features. Assume that $X$ can be clustered into $k$ groups $X = (X_{C_1},X_{C_2},...,X_{C_k})$ in such a way that the within-group correlation is relatively strong. As in \cite{Tib}, we split the data $(X, y)$ by rows into two disjoint parts $(X^{(1)}, y^{(1)}) $ and $(X^{(2)}, y^{(2)})$, and
extract prototype $X_{P_i}$ for each group $|C_i|$ using the first part of the data  $(X^{(1)}, y^{(1)}) $. We then construct the knockoff matrix $\td{X}_{P}^{(2)}$ only for the prototype features in the second part of the design matrix and run the knockoff selection. 
Finally, select group $i$ if $P_i$ is selected by the knockoff filter.
We have also developed a \textit{PCA prototype filter} and proved that both of these prototype knockoff filters have exact group FDR control. We compare these two prototype filters with a variant of the Reid-Tibshirani prototype method for group selection and the Dai-Barber group knockoff filter \cite{Barber} and provide numerical experiments to demonstrate the effectiveness of our methods.

 \subsection{Alternating sign effect}
 In this paper, we have also performed some analysis of the knockoff filter with certain path method statistics such as the Lasso path and the forward selection statistics. According to \eqref{cex5}, the knockoff filter threshold $T$ is determined by the ratio of the number of large positive and negative $W_j$.  Large, negative $W_j$'s may result in a large $T$ and fewer selected features. For certain design matrix $X$ and a specially designed response $y$ with strong signal strengths, our analysis shows that for the knockoff filter with certain path method statistics, e.g. the Lasso path statistic,
some knockoff variable $\td{X}_j$ can enter the model earlier than its original feature $X_j$. This could lead to large negative $W_j$ and reduce the power. We discuss some possible mechanism under which the path method statistic may suffer from this potential challenge for certain design matrices and the response. A possible scenario is when some features are positively (negatively) correlated but their contribution to the response $y$ has the opposite (same) sign, e.g. $X_j^T X_k > 0,  \b_j > 0, \b_k < 0$. We call this mechanism the \textit{alternating sign effect}. In general, the chance that such potential challenge arises is quite rare.  But it gives a healthy warning that this potential challenge could occur for certain statistics and we must use them with care.

\subsection{Extension of the sufficiency property}
The sufficiency property of a knockoff statistic $W$ in \cite{Candes} states that $W$ depends only on the Gram matrix $[X \tilde{X}]^T[X \tilde{X}]$ and the feature-response product $[ X \tilde{X}]^T y$.  In this definition, only part of the information of the response variable $y$, i.e. $[X \tilde{X}]^Ty$, is utilized. We generalize the sufficiency property such that $W$ can depend on the remaining information of $y$.  This generalization maintains the FDR control of the knockoff procedure.  As an application, we show that the classical noise estimate obtained by using the least squares can be incorporated in the knockoff filter without violating FDR control. We remark that in a recent work \cite{Bien18} the sufficiency property is also relaxed by allowing $W$ to be a function of $|| y ||_2$. The definition we propose is more general than the one used in \cite{Bien18}.

The main motivation for us to study the prototype knockoff filter for group selection is to alleviate the difficulty in feature selection of highly correlated features. In a related work \cite{CHH18}, we have developed a pseudo knockoff filter in which we relax one of the knockoff constrains. Our numerical study indicates that the pseudo knockoff filter could give high statistical power for certain statistic when the features have relatively strong correlation. Although we cannot establish rigorous FDR control for the pseudo knockoff filter as the original knockoff filter, we provide some partial analysis of the pseudo knockoff filter with the half Lasso statistic and establish a uniform FDP bound and an expectation inequality.

The rest of the paper is organized as follows. In Section \ref{gppca}, we introduce our prototype knockoff filter for highly correlated features.
We compare it to other group selection methods and provide numerical experiments to demonstrate the performance of various methods.
In Section \ref{sec_ana}, we discuss the potential challenge of the knockoff filter with certain path method statistics due to the alternating sign effect and generalize the sufficiency property of a knockoff statistic.

 \section{Prototype knockoff filters}\label{gppca}
 In this section, we propose two prototype group selection methods with group FDR control to overcome the difficulty associated with strong within-group correlation. It is well known that the grouping strategy provides an effective way to handle strongly correlated features. Our work is inspired by Reid-Tibshirani's prototype method \cite{Tib}, Dai-Barber's group knockoff filter \cite{Barber} and Barber-Cand\`es' high-dimensional  knockoff filter \cite{Candes2}. We provide a brief summary of the first two methods below before introducing our prototype filters. 

\subsection{Reid-Tibshirani's prototype method}
In \cite{Tib}, Reid and Tibshirani introduced a prototype method for prototype selection. It can be applied directly to group selection and  consists of the following steps. First, cluster columns of $X$ into $K$ groups, $\{C_1,...,C_K\}$. Then split the data by rows into two (roughly) equal parts  $
y = \left(
\begin{array}{c}
 y^{(1)} \\
 y^{(2)} \\
 \end{array}
 \right)
 $ and
$X= \left(  \begin{array}{c}  X^{(1)} \\  X^{(2)} \\  \end{array}  \right) $. Choose a prototype for each group via the maximal marginal correlation, using only the first part of the data $y^{(1)} , X^{(1)}$. This generates the prototype set $\hat{P}$. Next, form a knockoff matrix $\tilde{X}^{(2)}$ from $X^{(2)}$ and  perform the knockoff filter using $y^{(2)}, [X_{\hat{P}}^{(2)} \ \tilde{X}_{\hat{P}}^{(2)}  ]$. Finally, group $C_i$ is selected if and only if $X_{ \hat{P}_i}^{(2)}  $ is chosen in the filter process. The group FDR control is a direct result of Lemma 6.1 \cite{Tib}.
We remark that strong within-group correlation results in small difference between the prototype
$X_{P_i}$ and its knockoff pair $\td{X}_{P_i}$. Suppose that $X_{P_i}$ and  $X_j, \  j \neq P_i$ are in the same group  and strongly correlated, i.e. $|| X_{P_i} - X_j ||_2$ is small ($X_j, X_{P_i}$ are normalized). The knockoff constraint $(X_{P_i} - \td{X}_{P_i})^T X_j = 0$ implies
\beq\label{eq:R-T}
|| \td{X}_{P_i} - X_j ||_2 = || X_{P_i} - X_j ||_2 , \quad || X_{P_i} - \td{X}_{P_i} ||_2 \leq || \td{X}_{P_i} - X_j ||_2 + || X_{P_i} - X_j ||_2 = 2 || X_{P_i} - X_j ||_2.
\eeq
Thus $ || X_{P_i} - \td{X}_{P_i} ||_2$ is forced to be small.
Hence, applying this method directly to group selection may lose power for strongly correlated features. Our numerical experiments confirm this.

\subsection{Dai-Barber's group knockoff filter}\label{limbar}

In \cite{Barber}, Dai and Barber introduced a group-wise knockoff filter, which is a generalization of the knockoff filter.
 Assume that the columns of $X$ can be divided into $k$ groups $\{ X_{G_1},X_{G_2},...,X_{G_k} \}$. The authors construct the group knockoff matrix  according to $\tilde{X}^T \tilde{X} = X^T X , \; \tilde{X}^T X =\Sigma - S, \; \Sigma = X^T X, $ where $S \succeq 0$ is group-block-diagonal, i.e. $S_{G_i,G_j}=0$ for any two distinct groups $i\ne j$. In the equi-correlated construction, $S = diag(S_1,S_2,...,S_k), \; S_i = \gamma \Sigma_{G_i,G_i} = \gamma X^{T}_{G_i} X_{G_i},i=1,2,...,k $. The constraint $S \preceq 2 \Sigma$ implies $\gamma \cdot diag(\Sigma_{G_1,G_1},\Sigma_{G_2,G_2},...,\Sigma_{G_k,G_k}) = S \preceq 2\Sigma$. In order to maximize the difference between $X$ and $\tilde{X}$, $\gamma $ is chosen as large as possible:
  $  \gamma = \min \{1,2\cdot \lambda_{\min} (D\Sigma D) \}$, where $D = diag(\Sigma^{-1/2}_{G_1,G_1},\Sigma^{-1/2}_{G_2,G_2},...,\Sigma^{-1/2}_{G_k,G_k}   ) .$
In the later numerical experiments, we will also use the SDP construction, which was not considered in \cite{Barber}, by solving
$\max \sum_i^k \gamma_i \  \textrm{subject to } \   S = diag(S_1,S_2,...,S_k) \preceq  2 \Sigma, \; S_i = \gamma_i \Sigma_{G_i,G_i} $ and $0\leq \gamma_i \leq 1$. This construction can be viewed as an extension of the SDP knockoff construction in \cite{Candes}. Due to the extra cost in solving the optimization problem, the SDP construction is more expensive than the equi-correlated construction.

The group-wise statistic introduced in \cite{Barber} can be obtained after the construction of the group knockoff matrix.
The construction above guarantees the group-wise exchangeability. Finally, group FDR control, i.e.
$ FDR_{group} \triangleq E\left[   \frac{\{ \# \{ i: \beta_{G_i} =0 , i \in \hat{S} \}} { ( | \hat{S}|\vee1  )  }   \right] \leq q$,
 is a result of the group-wise exchangeability. Here $\hat{S} = \{ j: W_j \geq T\}$ is the set of selected groups for a chosen group statistic $W_j$.

\subsection{Prototype knockoff filters}
\subsubsection{Prototype using the data}
In this subsection, we propose a prototype knockoff filter that takes advantage of the prototype features and improves the computational efficiency in the construction of knockoffs and statistical power of the Reid-Tibshirani prototype method. We assume that $X$ can be clustered into $k$ groups $X = (X_{C_1},X_{C_2},...,X_{C_k})$ in such a way that within-group correlation is relatively strong. We select the prototype features using a procedure similar to Reid-Tibshirani's prototype method. 

\paragraph{Step 1}
Split the data by rows into two parts
 $y = \left(
 \begin{array}{c}
  y^{(1)} \\
  y^{(2)} \\
 \end{array}
 \right)
 $ and $X=\left(  \begin{array}{c}  X^{(1)} \\  X^{(2)} \\  \end{array}  \right) $, where $y^{(1)} \in R^{ n_1}, y^{(2)} \in R^{n_2} , X^{(1)} \in R^{ n_1 \times p}, X^{(2)} \in R^{n_2 \times p}$ and  then choose a prototype $P_i$ for each group via the marginal correlation $P_i = \arg \max_{ j \in C_i} |X^{(1) T}_j y^{(1)}|$, using the first part of the data.

 \paragraph{Step 2} Let $Q = \{ 1,2,..,p \}  \backslash P$. The knockoff matrix $\wt{X^{(2)}} =\lt( \wt{X^{(2)}}_P , X^{(2)}_Q  \rt) $ obeys
\begin{align}
 &  \wt{X^{(2)}}^T_P           \wt{X^{(2) }}_P   = X^{(2) \ T}_P X^{(2)}_P  , \quad X^{(2)\ T }_PX^{(2)}_P  - X^{(2)\ T }_P\wt{X^{(2) }}_P = diag(s_P) , s_P \in R^{|P|}, \label{eq:con1} \\
&  \lt( \wt{X^{(2)}  }_{P_i} - X^{(2)}_{P_i} \rt)^T  X^{(2)}_{C^c_i} = 0 , \quad \textrm{ for } i = 1,2,..,k , \label{eq:excY}
 \end{align}
 where $X_{C^c_i}  = (X_{C_1},..,X_{C_{i-1}}   , X_{C_{i+1}}  ,..,X_{C_k})$. The construction of Reid-Tibshirani's prototype method \cite{Tib} requires $ \lt( \wt{X^{(2)}  }_{P_i} - X^{(2)}_{P_i} \rt)^T  X^{(2)}_{P_i^c}  = 0 $, which implies  $\lt( \wt{X^{(2)}  }_{P_i} - X^{(2)}_{P_i} \rt)^T X^{(2)}_{C_i \backslash P_i} = 0.$ If the within-group correlation is strong, $\wt{X^{(2)} }_{P_i}  $ is forced to be close to $X^{(2)}_{P_i}$ (see \eqref{eq:R-T}). For group selection, the within-group constraints are not necessary and we do not impose the constrains between $ \wt{X^{(2)}  }_{P_i}$ and $X^{(2)}_{C_i \backslash P_i } $ in \eqref{eq:excY}. Thus,  we can construct $\wt{X^{(2)}}_{P_i}$ that can maximize its difference with the original feature $X^{(2)}_{P_i}$.
\beq\label{eq:proj}
  \overline{   X^{(2)} }_{P_i}  = X^{(2)}_{P_i} -  X^{(2)}_{C_i^c}   ( X^{(2) \ T}_{C_i^c}  X^{(2)}_{C_i^c}  )^{-1} X^{(2) \ T}_{C_i^c}  X^{(2)}_{P_i} , \quad W_i  =   \overline{   X^{(2)} }_{P_i} / ||    \overline{   X^{(2)} }_{P_i} ||_2^2, \quad   W \in R^{ n_2 \times k}.
\eeq
We consider two constructions of $s_P$ : the equi-correlated construction $s_{P_i} =2 \lambda_{\min}( (W^TW)^{-1} ) \wedge  || X^{(2)}_{P_i}||_2^2 $ for all $i$  and  the SDP construction
\beq\label{eq:pfsdp}
\textrm{ maximize } \sum\nolimits_{ i=1}^k s_{P_i}   \quad \textrm{ subject to }  \quad diag(s_P)  \preceq 2 (W^T W)^{-1},     \quad 0 \leq s_{P_i} \leq   || X^{(2)}_{P_i}||_2^2.
\eeq
We then construct $\wt{X^{(2)} } _P = X^{(2)}_P  - W diag(s_P) + UC$, where $U \in R^{n_2 \times k}$ is an orthonormal matrix  with  $U^T X^{(2)} = 0 $
and $C^T C = 2 diag(s_P) - diag(s_P) W^T W diag(s_P)  , C \in R^{ k \times k}$ is the Cholesky decomposition. The existence of $U$ and  $C$ requires $n_2 \geq p+ k$ and $ diag(s_P)  \preceq 2 (W^T W)^{-1}$.

\paragraph{Step 3} Recycle the first part data and select features with recycling. We concatenate the original design matrix on the first part with the knockoff matrix on the second part
$
\td{X} =
\lt(
\begin{array}{c}
X^{(1)} \\
\wt{X^{(2)}}\\
\end{array}
\rt), \
\td{X}_P =
\lt(
\begin{array}{c}
X^{(1)}_P \\
\wt{X^{(2)}}_P \\
\end{array}
\rt).$
One can verify that \eqref{eq:con1} still holds true for  $( X, \td{X} )$ with the same $s_P$
\vspace{-0.07in}
\beq\label{eq:excX}
 \wt{X}^T_P           \wt{X}_P   = X^{  T}_P X_P  , \quad X^T_PX_P  - X^{ T }_P\wt{X}_P = diag(s_P) , s_P \in R^{|P|}.\\
 \vspace{-0.05in}
\eeq
Finally, run the knockoff filter on $y$ and  $\lt(  X_P, \td{X}_P \rt)$ as \eqref{cex5} to obtain $\hat{S}$, and select group $i$ if $P_i \in \hat{S}$.

The recycling procedure in knockoff filter was developed in \cite{Candes2}, in which the authors showed that it could raise the power substantially. We also observed this improvement in our numerical simulation.
In order to select the prototype in step 1 efficiently and retain a large difference between $X_{P_i}$ and $\td{X}_{P_i}$, we choose $n_1 = 0.2 n \vee 5 \max_i |C_i| $ and $n_2 = n - n_1$. The requirement $n_2 \geq p+ k$ implies $n \geq  1.25(p+k)$.

The main result of the prototype knockoff filter is that it controls the group false discovery rate.

\begin{thm}\label{ptfdr}
For any $q\in [0,1]$, the prototype knockoff filter using the knockoff and knockoff+ filter controls the group mFDR and group FDR respectively,
\[
mFDR_{group} \triangleq E \left[   \frac{ \# \{ i: \  \beta_{C_i} = 0, \ i \in \hat{S}   \}  }{ \# \{ i :\  i \in \hat{S} \} + q^{-1}     }  \right]
\leq q, \quad FDR_{group} \triangleq E \left[   \frac{ \# \{ i: \  \beta_{C_i} = 0, \ i \in \hat{S}   \}  }{ \# \{ i :\  i \in \hat{S} \} \vee 1     }  \right]
\leq q.
\]
\end{thm}

The result is a consequence of the Lemma below and the super-martingale argument \cite{Candes}.
 \begin{lem}(i.i.d signs for the null clusters). Let $\eta \in \{\pm 1 \}^k$ be a sign sequence independent of $W_P$, with $\eta_j = +1$ for all non-null clusters $j$ and $\eta_j \overset{i.i.d}{\sim} \{  \pm1\} $ for null clusters $j$. Conditional on $y^{(1)}$,  $ (W_{P_1},...,W_{P_k} ) \overset{d}{=} (W_{P_1} \eta_1, ..., W_{P_k} \eta_k  ) $.
 \end{lem}

\begin{proof} Recall that the features in the final knockoff screening process (step 3) is $\lt(  X_P, \td{X}_P \rt) $. The statistic $W_P$ can be written as $W_P = W( [ X_P, \td{X}_P ]^T [X_P, \td{X}_P],[X_P, \td{X}_P]^T y )$. Since the true model is $y = X\b + \e,  \ \e \sim N(0, \sigma^2 I_n )$, we have $y^{(i)} = X^{(i)} \b + \e^{(i)}, i = 1,2$, where $\e^{(1)}$ is the first $n_1$ components of $\e$ and $\e^{(2)}$ consists of the remaining components. In particular, the 
prototype set $P$ and $y^{(1)}$ are independent of $\e^{(2)}$ and conditional on $y^{(1)}$, the randomness of $W_P$ comes from $\e^{(2)}$ only. Following the analysis for the original knockoff filter \cite{Candes}, we just need to verify the exchangeability for the features and the response. The exchangeability for the features comes from \eqref{eq:excX}. Conditional on $y^{(1)}$, the exchangeability for the response is guaranteed by the invariance of $Var \lt( [X_P \ \td{X}_P ]_{swap(S^{\prime})}^T y \B| y^{(1)} \rt)$ for any $S^{\prime}$, which is a result of \eqref{eq:con1}, \eqref{eq:excX}, and  $E \lt[   ( X_{P_i} -  \td{X}_{P_i}  )^T y  \B|  y^{ (1) } \rt]  = 0$ for null clusters $i$ :
\[
\bal
&E\lt[   ( X_{P_i} -  \td{X}_{P_i}  )^T y \B| y^{(1)} \rt] = E\lt[   ( X^{(2)}_{P_i} -  \td{X}^{(2)}_{P_i}  )^T y^{(2)} \B| y^{(1)} \rt]  = E\lt[   ( X^{(2)}_{P_i} -  \td{X}^{(2)}_{P_i}  )^T (X^{(2)} \b + \e^{(2)} ) \B| y^{(1)} \rt]  \\
  =& E\lt[   ( X^{(2)}_{P_i} -  \td{X}^{(2)}_{P_i}  )^T X^{(2)} \b  \B| y^{(1)} \rt]  
 = ( X^{(2)}_{P_i} -  \td{X}^{(2)}_{P_i}  )^T X^{(2)} \b = ( X^{(2)}_{P_i} -  \td{X^{(2)}}_{P_i}  )^T( X_{C_i^c}^{(2)} \b_{C^c_i}  ) = 0.
\eal
\]
The first equality holds because $X_{P_i}$ agrees with $\td{X}_{P_i}$ in the first $n_1$ components. The third equality holds due to the fact that the first part of the data, $y^{(1)}$, is independent of $\e^{(2)}$, which generates the second part of the data, $y^{(2)}$, and has zero mean. The final equality follows from \eqref{eq:excY}.
\end{proof}

\subsubsection{A PCA prototype filter}
In this subsection, we propose a PCA prototype filter for group selection for some special cases. The PCA prototype filter works well under the following assumptions: (i) the within-group correlation is relatively strong; (ii) the features within each group are positively correlated; (iii) we know a priori that the signals in each group  have the same sign. Assume that $X$ can be clustered into k groups $ X = (X_{C_1}, X_{C_2} ,...X_{C_k} )$ as in the previous subsection. The PCA prototype filter follows steps similar to those described in the previous subsection. First of all, we calculate the first principal component $V_{i}$ for each group $X_{C_i} , 1 \leq i \leq k$. Secondly, we construct the knockoff matrix following a similar procedure
\beq\label{eq:proj2}
\bar{V}_i  =  V_i  -  X^{(2)}_{C_i^c}   ( X^{(2) \ T}_{C_i^c}  X^{(2)}_{C_i^c}  )^{-1} X^{(2) \ T}_{C_i^c}  V_i,  \quad  W_i = \bar{V}_i / || \bar{V}_i||_2^2,  \quad \td{V} = V - W diag(s_P)  + UC
\eeq
where $P = \{ 1,2,..,k\}$ and  $ s_P$ is obtained from the equi-correlated construction or the SDP \eqref{eq:pfsdp} with a slightly different constraint
$0 \leq s_i \leq 1$. $U$ is an orthonormal matrix with $U^TX = 0$ and $C$ is obtained by the same formula. Finally, we run the knockoff filter on $y$ and $[V, \ \td{V}]$ to obtain $\hat{S}$ and select group $i$ if $i \in \hat{S}$. Theorem \ref{ptfdr} holds true for the PCA prototype filter and the proof is similar.

\begin{remark}
In this paper, we focus on selecting one prototype for each group. The prototype knockoff filter can also be generalized to include a few prototypes for each group. 
\end{remark}

\subsubsection{Computing the projection}
For $X \in R^{n \times p} , n > p$, we design a recursive procedure to calculate $ \bar{ X }_{C_i}  = X_{C_i}  -  X_{C_i^c}   ( X^{ T}_{C_i^c}  X_{C_i^c}  )^{-1} X^{ T}_{C_i^c}  X_{C_i} $ for all $i$ with $O(np^2)$ flops. As a result, we can obtain $W$ in \eqref{eq:proj} with $O(np^2)$ flops. Similar result holds true for  \eqref{eq:proj2}. For simplicity, we assume $|C_i| = l$ for all i.

\noindent \textbf{  Algorithm : Projection $(X,k, l )$  }

\textbf{1. } If $ k= 1$, return $X$

\textbf{2.} Else : divide $X$ into two parts $a = \lfloor k / 2 \rfloor, \ X_1 = X(:, 1: al ) , \  X_2 =  X(: , (a l + 1) : kl ) $. Then compute the projection recursively : $W_1 = X_1 - X_2 (X_2^T X_2)^{-1} X_2^T X_1 , \  W_2 = X_2 - X_1 (X_1^T X_1)^{-1} X_1^T X_2$. $\bar{X_1} =\textbf{Projection}(W_1, a, l), \  \bar{X_2} =  \textbf{Projection}(W_2, k -  a, l)$, return $ (\bar{X_1}, \bar{X}_2) .$

For fixed $n, l$, let $a_k$ be the total flops. From the recursion, we have $a_1 \leq C,   a_{2^m} \leq 2 a_{ 2^{m-1}} + C 2 n (  2^{m-1}  l )^2 $ for some universal constant $C$. Simple calculation yields $a_{2^m} \leq C n( 2^m l )^2$. The monotonicity of $a_k$ implies $a_k \leq 4 C n(k l)^2 = O(np^2)$. Similar algorithm and analysis can be applied to $X$ with different size of groups.

\subsection{Numerical comparison study of different knockoff group selection methods}\label{RT-DB}
 In this subsection, we perform several numerical experiments to compare Reid-Tibshirani's prototype method,  the prototype knockoff filter, the PCA prototype filter and Dai-Barber's group knockoff filter. Throughout the section, the group size is $5$, the noise level is $1$, the nominal FDR is $20\%$ and we use the adaptive threshold $T_1$ defined in \eqref{cex5} and the knockoff+ selection procedure.

\paragraph{Simulated signals with no cancellation.}
We use the numerical example in \cite{Barber} to compare several methods. The design matrix $X \in R^{3000 \times 1000}$ is clustered into $200$ groups with $5$ features in each group. The rows of $X$ follow the $ N(0,\Sigma)$ distribution with columns normalized, where $\Sigma_{ii} =1$, $\Sigma_{ij} = \rho$ for $i \ne j$ in the same group and $\Sigma_{ij}= \gamma \cdot\rho$ for $i\ne j$ in a different group. We choose 20 groups ($l=20$) with one signal in each group. Specifically, we first choose $l$ groups $i_1, i_2,...,i_l$ randomly and then generate the signals $\beta_j$ at indices $j = C_{i_1,1}, C_{i_2,1},..,C_{i_l,1}$ (the first feature in the selected groups)
  $ \overset{i.i.d}{\sim} \{ \pm M\}$ and $\beta_j =0$ for other indices.  
  The signal amplitude $M$ is $3.5$.

  For Reid-Tibshirani's prototype method and our prototype knockoff filter, we choose $n_1 = 0.2 n  = 600$ and split the data $y$ and $X$ into two parts as described in both methods. Apply the first part of the data to obtain the prototype for each group and then construct the SDP knockoffs on $\td{X}^{(2)}$. For these two methods and the PCA prototype filter, we use the orthogonal matching pursuit (OMP) statistic \cite{Candes, OMP} and use the following short hand notations \textit{Reid-Tibshirani knockoff+}, \textit{Prototype knockoff+} and \textit{PCA knockoff+} in the following Figures.
  For the group knockoff filter, the first method is to construct the equi-correlated group knockoffs and then apply the group Lasso path (GLP) statistic, which is the method discussed in \cite{Barber}. We also consider two other group selection methods based on the SDP group knockoffs and the OMP statistic. After constructing the SDP group knockoffs $\td{X}^g$, we extract the first principal component of each group $X_{C_i}, \td{X}^g_{C_i},i =1,2,..,k$, which form $V, \td{V} \in R^{n \times k}$. We then run the knockoff filter on $y$ and $[V, \  \td{V}]$ with the OMP statistic. This method is an analog of the PCA prototype filter with a different construction on $\td{V}$. Meanwhile, it is equivalent to the group knockoff with a special group knockoff statistic and thus the group FDR control follows from \cite{Barber}.
  The other method uses a group version of the OMP statistic defined below. After selecting group $j_t, t \geq 0 $, we define $r_t$ to be the residual of the least square regression of $y$ onto  $\{ X_{C_{j_1}  } ,X_{C_{j_2}}  ,...,X_{C_{j_t}} \}$ ($r_0 = y$) and choose group $j_{t+1}$ via
$ j_{t+1} = \arg \max_j  ||  r_t^T X_{C_j} ||_2.$
Let  $Z_j , \td{Z}_j $ be the reversed order when the variable $X_{C_j}$ or $\td{X}_{C_j}$ enters the model, e.g. $Z_j  \ (\td{Z}_j)= 2k$ if  $X_{C_j} \ (\td{X}_{C_j})$ enters the first, where $k$ is the number of groups. We then define the group OMP statistic as $W_j = \max ( Z_j ,\  \td{Z}_j) \cdot sign(Z_j  - \td{Z}_j), j =1,2,..,k.$ We use the following short hand notations \textit{Group knockoff+ GLP}, \textit{PCA Group knockoff+} and \textit{Group knockoff+ OMP} to stand for these three group knockoff methods in the following Figures.

  To study the effect of within-group correlation, we fix the between-group correlation factor $\gamma =0$ and vary $\rho = 0, 0.1,...,0.9$. To study the effect of between-group correlation, we choose the within-group correlation factor $\rho = 0.5$ and $\rho=0.9$ while varying $\gamma =0,0.1,...,0.9$.  Each experiment is repeated 100 times. The group Lasso path is calculated via the \textit{gglasso} package \cite{gglasso} in $R$ with number of $\lambda$ equal to 1000.
  \begin{figure}[h]
   \centering
\includegraphics[width =\textwidth ]{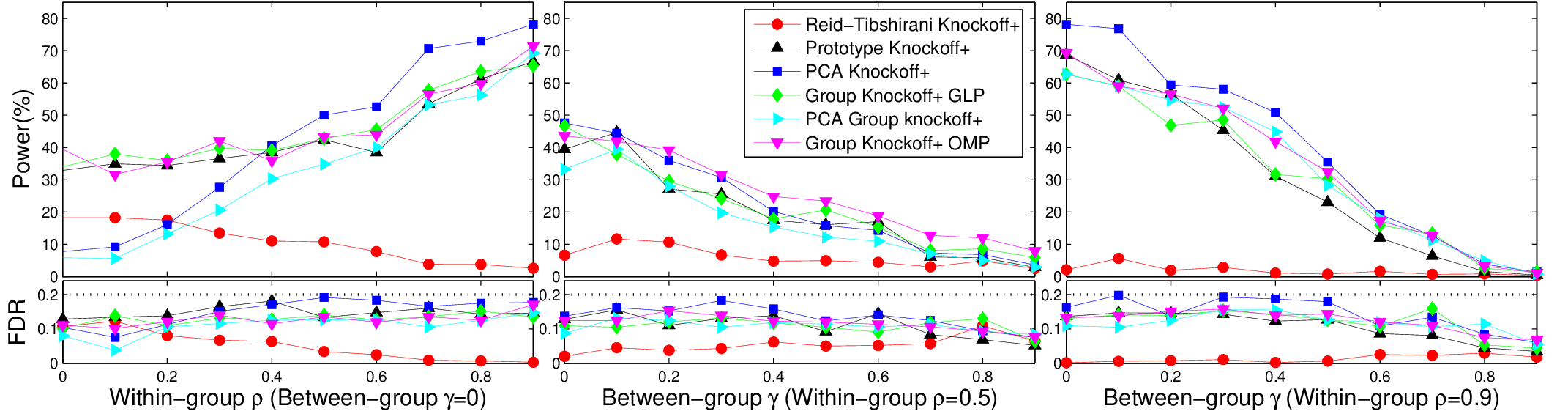}
 \caption{Testing Reid-Tibshirani's prototype method, the prototype knockoff filter, the PCA prototype filter and the group knockoff filter with varying  within-group correlation or between-group correlation. Here, we use the knockoff+ selection procedure. }
 \label{gp1}
 \end{figure}

For each design matrix $X$ and its knockoff $\td{X}$,  we consider the measurements  :  average of $ || X_{P_i} - \td{X}_{P_i}||_2^2 / 2  = s_{P_i} $ over all prototype features for the Reid-Tibshirani's prototype knockoff, the prototype knockoff, the PCA prototype knockoff, $ || X_i - \td{X}^g_i ||_2^2 / 2 =  2 \lam_{\min}(D\Sigma D)  $ uniformly for the equi-correlated group knockoff and average of $||X_i  - \td{X}^g_i||^2_2/2$ (i.e. average $\gamma_i$) over all features for the SDP group knockoff. In three experiments, the mean values of this average  $(  \bar{s}^{RT}_{P_i} , \bar{s}^{Prototype}_{P_i} , \bar{s}^{PCA}_{P_i}, \lam^{group}_{\min},\bar{\gamma}_i^g )$ (10 design matrices in each experiment) are $(0.14, 0.50, 0.79, 0.36, 0.38)$, $( 0.13, 0.39,  0.49, 0.29, 0.31) $ and $ ( 0.03  ,  0.30 ,   0.44,    0.26, 0.28)$.  We have performed numerical experiments for a general class of design matrices and found that the PCA prototype knockoff has the largest average difference and the average difference $||X_{P_i} - \td{X}_{P_i}||_2^2$ of the prototype knockoff is slightly larger than that of the group knockoff. These two prototype methods overcome the problem of strong within-group correlation and improve the power of the Reid-Tibshirani's prototype method significantly.

The performance of the prototype knockoff filter is comparable to that of the group knockoff with the GLP and the OMP statistics. When the within-group correlation is strong and the between-group correlation factor $\gamma$ is relatively small (the left and the right subfigures), the first principal component captures most of the information and the PCA prototype filter outperforms the group knockoff filter with the GLP or the OMP statistic due to the larger difference between the prototype feature and its knockoff. In the middle subfigure where the within-group correlation is $0.5$, the PCA prototype filter offers slightly less power than that of  the group knockoff filter with the OMP statistic due to the relatively weak within-group correlation. Comparing the performance of the PCA knockoff+ and the PCA group knockoff+ in Figure \ref{gp1}, we find that the PCA knockoff+ consistently offers more power than that of the PCA group knockoff+,
which justifies that the knockoff construction that we propose indeed increases the power. The advantage of larger $s_{P_i}$ in the PCA prototype knockoff can be exploited if we know a priori that the signals in each correlated group have the same sign. We have also constructed the equi-correlated group knockoff and then applied the group OMP statistic. It offers power similar to that of the Group knockoff+ OMP,  see  Figure \ref{gp1}. For the later numerical experiments, we only focus on the power of different methods since these methods are guaranteed to control FDR.

\paragraph{Simulated signals with cancellation.}
In the second example, we first generate $X \sim N(0, \Sigma)$ with within-group correlation $\rho =0 , 0.1,..,0.9$ and between-group correlation $0$ as in the previous example. We then generate signals with cancellations in the first $20$ correlated groups. We consider two settings of signal amplitude: (a) partial cancellation : $\b_{C_i} = (a_i, -0.5a_i, 0,0..,0) $; (b) complete cancellation :  $\b_{C_i} = (a_i, -a_i, 0,0..,0) $,  where $a_j  \overset{i.i.d}{\sim} \{ \pm 5\}, i = 1,2,..,20$. Other settings remain the same as in the previous example. We construct the SDP knockoff for the prototype knockoff filter and the SDP group knockoff and focus on the prototype knockoff filter, the PCA prototype knockoff filter and the group knockoff  filter with the OMP statistic.

  \begin{figure}[h]
   \centering
 \includegraphics[width =\textwidth ]{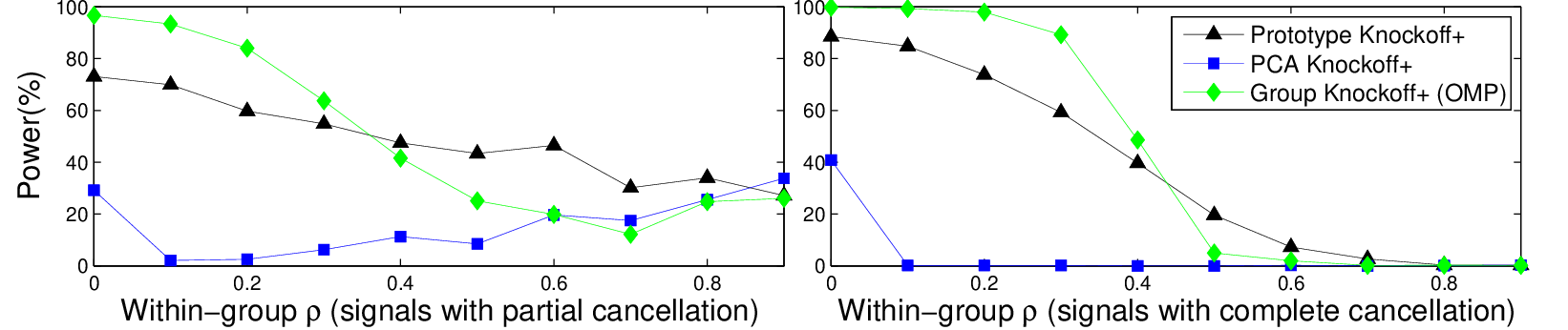}
\caption{Comparing the powers of the prototype knockoff filter, the PCA prototype knockoff filter, and the group knockoff filter using signals with cancellation. }
 \label{gp_cancel}
 \end{figure}

In Figure \ref{gp_cancel}, when the within-group correlation is weak, the group knockoff filter captures more signals and offers more power than that of the prototype knockoff filter and the PCA prototype knockoff filter. For large $\rho$, two signals almost merge into one signal and the prototype knockoff filter slightly outperforms the group knockoff. The PCA prototype filter loses considerable power in this case since the projection of the signals within the group onto the first principal component direction suffers from the cancellation. Without any prior knowledge about the signal, the first principal component may not be a good prototype and we recommend to construct a data-dependent prototype via marginal correlation for the prototype knockoff filter.

In the following numerical experiments,  without specification, we use the SDP construction for prototype knockoff filters and the group knockoff.
We focus on the prototype knockoff filter, the PCA prototype knockoff filter, the group knockoff filter with the associated OMP statistic, and the group knockoff filter with the GLP statistic.

\paragraph{Group features spanned by one hidden factor. }
For $j=1,2,..,100$, we generate group features $X_{C_j} (\xi) = (1,1,..,1)  \cdot \cos j\xi  \in R^5$ with group size 5.  In total, we have $p = 500$ features. We then generate $n =3p =  1500$ i.i.d realizations of $\xi$, $\xi \sim Unif[0, 2\pi] $, assemble $X_{C_j} (\xi)$ by rows and normalize the columns to obtain  $X_{C_j} \in R^{n \times 5} $ and  $X_0 = (X_{C_1},   X_{C_2}, ..,X_{C_{100}} ) \in R^{n \times p}$. To avoid linear dependence, we perturb $X_0$ by some white noise: $X_0+ \sigma \cdot \td{G}$, where $\td{G}\in R^{n\times p}$ is obtained by normalizing the columns of  $G\in R^{n\times p}$, $G_{ij} \overset{i.i.d}{\sim} N(0,1)$. We then normalize the columns of  $X_0+ \sigma \cdot \td{G}$ to obtain the design matrix $X$ and the modified group features matrix $X_{C_j}$.
We select the last 15 groups (group with high frequency features), and generate the signal $\beta_{C_j}$ in the selected group via one of the settings of signal amplitude (a) $\beta_{C_j} = (a_j,0,0,0,0 )^T, a_j \overset{i.i.d}{\sim}  \{ \pm 3.5 \}$ ; (b) $ \beta_{C_j} = (a_j, - a_j,0,0,0 )^T,  a_j \overset{i.i.d}{\sim}  \{ \pm 6 \}$; (c) $\beta_{C_j} = (a_j, a_j, -a_j,0,0 )^T, a_j \overset{i.i.d.}{\sim} \{ \pm 3.5 \}$. We then generate $y$ as follows : $y = \sum_{j= 86}^{100}  X_{C_j} \beta_{C_j} + \e, \ \e \sim N(0, I_n)$. Note that the noise level is 1 and in setting (b), (c), there are cancellations in the signals.


By definition of $X_{C_j}$, the within-group correlation mainly depends on $\sigma$. Smaller $\sigma$ results in larger within-group correlation. We vary $\sigma = 1,0.9,..,0.1$ and repeat 100 times for each experiment to compare the performance of several methods. The results are plotted in Figure \ref{gp_pca1}.

  \begin{figure}[h]
   \centering
 \includegraphics[width =\textwidth ]{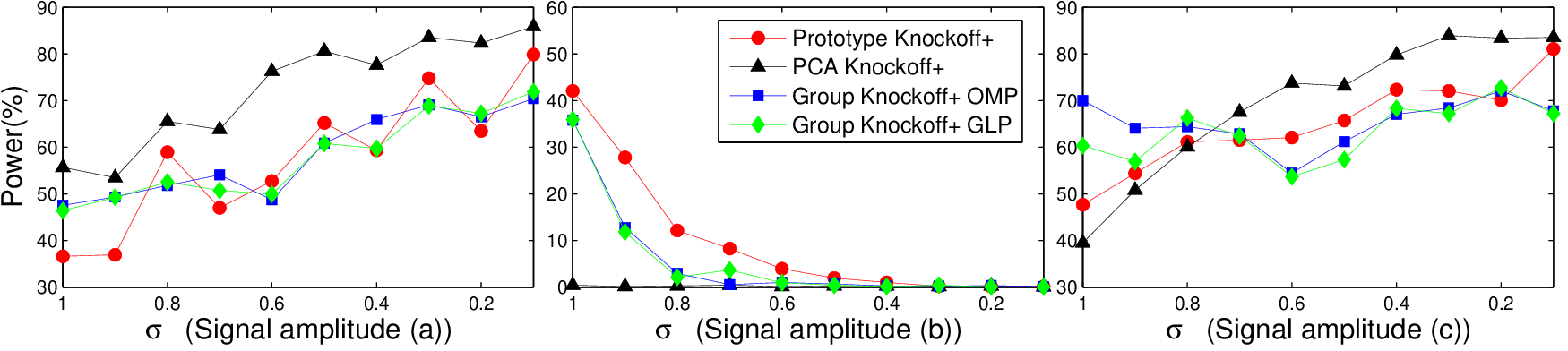}
 \caption{Comparing the powers of the prototype knockoff filter, the PCA prototype knockoff filter and the group knockoff filter by varying within-group correlation and signal amplitude. Here, the group features are generated by one hidden factor. }
 \label{gp_pca1}
 \end{figure}

In the case of signal setting (a) and of signal setting (c) with small $\sigma$ ($\sigma \leq 0.7$) , the PCA prototype filter offers significantly more power than that of the group knockoff filter with both statistics. The performance of the prototype knockoff filter is comparable to that of the group knockoff filter and it offers more power in the case (b).
The equi-correlated group knockoff with both statistics (not plotted) offer power similar to that of the SDP group knockoff.

\paragraph{Group features spanned by two hidden factors. }
We first generate $k=100$ low frequencies $\om_{L,j}  \overset{i.i.d}{\sim} Unif[0, 10] $ and $k$ high frequencies $\om_{H,j} \overset{i.i.d}{\sim} Unif[100,200]$. The feature in group $j$ is a convex combination of $\cos \om_{L,j} \xi $ and $\cos \om_{H,j} \xi$:
\[
X_{C_j}(\xi,  \tau_j) =  (  \tau_{j, 1}  \cos \om_{L,j} \xi + (1 - \tau_{j,1}) \cos \om_{H,j} \xi \ , ..., \
\tau_{j, 5}  \cos \om_{L,j} \xi + (1 - \tau_{j,5}) \cos \om_{H,j} \xi  ) \in R^5
\]
where $\xi \sim Unif[0, 100]$ and $\tau_{j,i} \overset{i.i.d}{\sim} Unif[0, \tau]$ for a parameter $\tau \in [0, 1]$. By definition, the within-group correlation mainly depends on $\tau$. If $\tau =0$, the features in group $j$ are spanned by one factor $\cos \om_{H,j} \xi $. Applying a procedure similar to the one used in the previous example, we can generate $X_0 \in R^{1500 \times 500}$. To avoid linear dependence, we perturb $X_0$ by some white noise and normalize $X_0 + 0.3 \cdot \td{G}$ to obtain the design matrix $X$. Other settings, including 3 signal amplitudes and the sparsity (15 selected groups), remain the same as in the previous example. Due to the randomness in generating $\tau_j$ and the within-group correlation of $X$, we generate 5 $X_{\tau} \in R^{n\times p}$ for each $\tau = 1,0.9,..,0.1$ and consider the average results. For each $X_{\tau}$, we construct simulated data and repeat the experiment 100 times to obtain an average power and FDR. We plot the power averaged over  five $X_{\tau}$ in Figure \ref{gp_pca2}.

 \begin{figure}[h]
   \centering
 \includegraphics[width =\textwidth ]{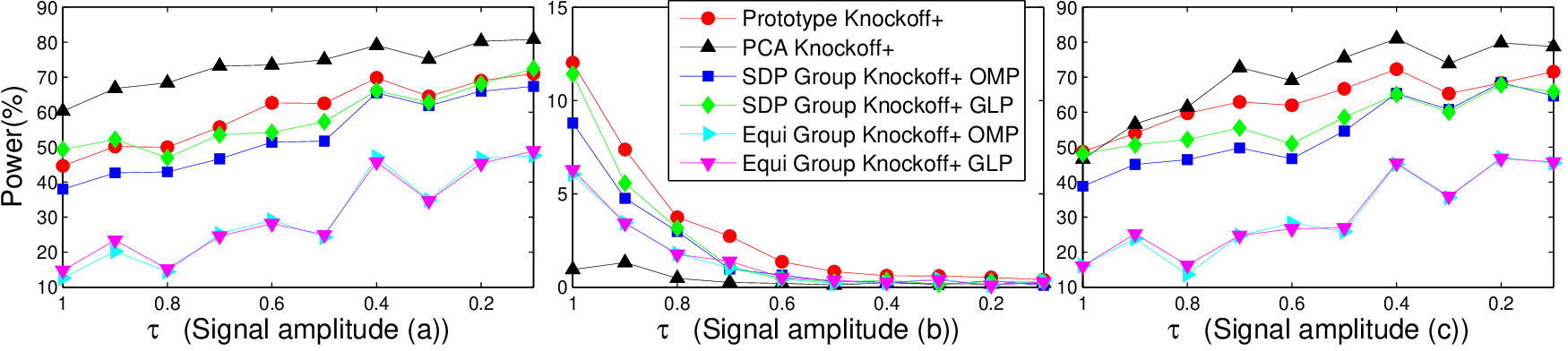}
 \caption{Comparing the powers of the prototype knockoff filter, the PCA prototype knockoff filter and the group knockoff filter by varying within-group correlation and signal amplitude. Here, the group features are generated by two hidden factors.}
 \label{gp_pca2}
 \end{figure}

In the case of (a) and (c), the PCA prototype filter outperforms the group knockoff filter and the prototype filter offers more power than that of the group knockoff filter almost in all cases. We can see that the equi-correlated group knockoff with the GLP or the OMP statistics (Equi- Group knockoff+ GLP, OMP in Figure \ref{gp_pca2}) only offers about $50\%$ power of the SDP group knockoff. In this example, the average of $||X_i - \td{X}^g_i||^2_2 / 2$ over all features of the equi-correlated group knockoff is $0.1402$
and is much smaller than that of the SDP group knockoff, which is $ 0.3581$. This explains the loss of considerable power.

From the last two examples, we observe that when the group features are spanned by one or a few hidden factors and the signals within each group are not canceled completely, the PCA prototype filter could offer more power than  that of  the group knockoff filter. If the within-group correlation is relatively strong and the signals are canceled completely as in the case (b) with $\sigma \leq 0.8$ in Figure \ref{gp_pca1}, \ref{gp_pca2}, the information is lost in the measurement $y$ and it is challenging to perform group selection in this case.

\paragraph{Computational efficiency}
The computational cost of the prototype knockoff filter or the group knockoff mainly consists of the knockoff construction and the feature selection process. We can apply the equi-correlated or the SDP construction. In the equi-correlated construction, computing the smallest eigenvalue $\lam_{\min} ( (W^TW)^{-1} )$ or $\lambda_{\min} ( D\Sigma D )$ is relatively cheap and both methods construct knockoffs in $O(np^2)$ flops (the computation of $X^TX$ is one of the bottlenecks). In the SDP construction of the prototype knockoff,  \eqref{eq:pfsdp} is a  $k-$dimensional SDP ($k$ is the number of groups). If $k \ll p $, the SDP construction of the prototype knockoff can be solved efficiently by exploiting its special structure \cite{convex}. We remark that the construction of the SDP knockoffs in \cite{Tib} solves a $p-$dimensional problem of the same type.  By default, we use the SDP construction in the prototype knockoff filter. The performance of the equi-correlated group knockoff depends on the amplitude of $\lambda_{\min} ( D\Sigma D ) $, which is exactly the average $|| X_i -\td{X}_i||_2^2 / 4$ and could be much smaller than the corresponding average of the SDP group knockoff. In this case, the equi-correlated group knockoff could lose significant power as we have demonstrated in the last example and one may have to construct the SDP group knockoff, which is more expensive.


The prototype knockoff filter constructs the knockoff statistic using $y, ( X_P, \tilde{X}_P) \in R^{n \times 2k}$. The computational cost of many useful statistics, e.g. the Lasso path, the OMP, the ridge regression and the Lasso statistics, is $O(n k^2)$, is relatively small compared with the cost in the knockoff construction.


\section{Some observations of the knockoff filter}\label{sec_ana}
\subsection{Alternating sign effect}
 In this section, we will perform some analysis for the knockoff filter and illustrate a potential challenge that we may encounter for some path method statistics, including the Lasso path (LP) statistic and the forward selection statistic \cite{Candes, FS} for certain implementation procedure to update the residual. After performing $l$ steps in one of the path methods (or at $\lambda$ for the Lasso path), we denote by $E$ the set of features that have entered the model. 
Assume that $X_j, \tilde{X}_j \notin E$ at the $l$th step,  but at the next step either $X_j$ or $\tilde{X}_j$ will enter the model. After $l$ steps, the residual is
$r_l = y - X_E\hat{\beta}_E = X\b - X_E \hat{\b}_E + \e$.
Since $X_j ,\tilde{X}_j \notin E$, we have $X_j ^TX_i = \tilde{X}_j X_i , \; \forall X_i\in E$.
 The same equality holds for $ \tilde{X}_i$. For $X_j,\tilde{X}_j$, their marginal correlation with $r_l$ determines which one of these two features will enter into the model first at the $(l+1)$st step:
 \[
 \begin{aligned}
 & X_j^T r_l = X_j^T (X\beta - X_E\hat{\beta}_E) + X_j^T\epsilon ,  \quad   \tilde{X}_j^T r_l = \tilde{X}_j^T (X\beta - X_E\hat{\beta}_E)  + \tilde{X}_j^T\epsilon ,\\
 &  (  X_j -\tilde{X}_j)^T r_l =  (X_j-\tilde{X}_j)^Ty =  s_j \beta_j + (X_j -\tilde{X}_j)^T \epsilon .\\
 \end{aligned}
 \]
Assume that the noise level is relatively small. If

\beq\label{eq:alt_assume}
 sign(\beta_j) \ne sign(X_j ^T(X\beta -X_E \hat{\beta}_E))   \textrm{ and } \  |X_j ^T(X\beta -X_E\hat{\beta}_E)| > |s_j \beta_j|,
 \eeq
  then  $\tilde{X}_j$ will enter into the model at the $(l+1)$th step since
\[
   \begin{aligned}
    &   |\tilde{X}_j ^T r_l | -|X_j ^T r_l | \approx  | X_j ^T(X\beta -X_E \hat{\beta}_E) - s_j \beta_j | -| X_j ^T(X\beta -X_E \hat{\beta}_E)| \\
   =& \; sign\left( X_j ^T(X\beta -X_E \hat{\beta}_E)\right) \left[ \left(  X_j ^T(X\beta -X_E \hat{\beta}_E) - s_j \beta_j \right) -  \left(X_j ^T(X\beta -X_E \hat{\beta}_E) \right)\right]  = |s_j\beta_j|  >0.\\
   \end{aligned}
\]
This may reduce the power of the knockoff filter.

To understand under what condition the assumption \eqref{eq:alt_assume} could be satisfied, we simply replace $X\beta -X_E \hat{\beta}_E $ by $y$.
Then this assumption can be reformulated as $sign(X_j^Ty ) \ne sign(\b_j)$ and $| X_j^T y|  > |s_j \beta_j| $. Suppose that the correlation between the features is relatively strong. The direction $X_j$ can capture more signals from $y$ and thus it is likely that the marginal correlation $|X_j^T y|$ is larger than $| \b_j|  > |s_j \b_j|$. This could occur if some features are positively (negatively) correlated but their contribution to the response $y$ has the opposite (same) sign, e.g. $X_j^T X_k > 0,  \b_j > 0, \b_k < 0$. 
We call this mechanism that could lead to \eqref{eq:alt_assume} the \textit{alternating sign effect}.

 \paragraph{A numerical example}
We generate the design matrix $X \in R^{900 \times 300}  \sim N(0, \Sigma)$, where $\Sigma_{ij} =  0.9^{|i-j| }$, and construct its SDP knockoffs. Since some columns of $X$ are strongly correlated, the knockoff factor $s \in R^{300}$  (defined in \eqref{revko1}) obtained in the SDP construction can be very small. Hence, some $X_i$ is very close to its knockoff $\td{X}_i$,  which may lead to the degeneracy of the augmented design matrix $[X  \ \td{X}]$. Denote $S \teq \{  i: s_i  \leq 0.01\}$ and $L \teq  \{  i: s_i   >  0.01\}$. We pick $k = 30$ features $i_1, i_2,..,i_k$ randomly in the set $L$ and then generate the signal amplitude $\beta_{i_j}   =  \f{0.75}{ s_{i_j} }$. We then pick half of the signals $\beta_{i_j}$ randomly and change their signs. By construction, $50\%$ of the signals are positive. We construct $y = \sum_{j=1}^k X_{i_j} \beta_{i_j} + \e, \ \e \sim N(0, \sigma^2  I_n)$ and run the knockoff+ filter 
with the LP, the forward selection and the Lasso statistics with tuning parameter $\lam =\sigma$ on $y$, $[X_L, \td{X}_L, X_S]$ to obtain the statistic $W_L$. We set $W_S = 0$. For the forward selection statistic, initializing $r_0 = y$,
we iteratively choose $X_{i_l}$ ($l\geq 1$) via $i_l = \arg \max_j |\la  r_{l-1} , X_j \ra |$. According to \cite{Candes}, we can apply two different procedures to update the residual at step $l$. 
In the first procedure, the residual $r_l$ is simply updated by eliminating the effect of the selected variable $X_{i_l}$ from the previous residual $r_{l-1}$, while in the second procedure $r_l$ is updated by eliminating the effect of all selected variables from $y$, i.e.
\beq\label{FS_residual}
r^{(1)}_l  = r^{(1)}_{l-1} - \la r^{(1)}_{l-1}, X_{i_l}  \ra X_{i_l}, \quad r^{(2)}_l = y - P_l y
\eeq
where $P_l$ is the orthogonal projector onto the space spanned by the $l$ selected variables. For the forward selection statistic with these two procedures, we use short hand notations \textit{FS, OMP} since the latter uses orthogonal matching pursuit. 
The FS and OMP statistics can be computed by the knockoff package \cite{Candes} directly. We also apply the Benjamini-Hochberg (BHq) procedure \cite{BHq1} that first calculates the least-squares estimate $\hat{\b}^{LS} = (X^T X)^{-1} X^T y$ and set $Z_j = \hat{\b}_j^{LS} /  ( \sigma  \sqrt{ (G^{-1} )_{jj}  })$ to yield the z-scores, where $G = X^TX$ is the Gram matrix. Note that marginally $Z_j \sim N(0, 1)$. Variables are then selected by the threshold $T \teq  \min \{   t : 300 \cdot P(  | N(0,1) | \geq t  ) / \# \{ j : |Z_j| \geq t \} \leq 20\%
\} $. 

Due to the randomness in generating the signal $\b$, we generate 5 signals for each noise level $\sigma = 1.4 , 1.2,..,0.2$ and consider the average and the extreme results. For each signal and $\sigma$, we repeat the experiment 100 times to obtain an average power and FDR. For each $\sigma$, we calculate the power, the FDR averaged over 5 cases with different signals, the minimal power and its associated FDR over 5 cases. The results are plotted in Figure \ref{alter}.

\begin{remark}
In the selection procedure, we effectively turn off the knockoff $\td{X}_i$ for $i \in S$, i.e. $s_i$ is small. If we run the knockoff process on $y$ and the whole augmented design matrix $M$, the degeneracy of $M$ can lead to significant numerical instability and the loss of FDR control. Note that $W_S =0$ implies that the features in set $S$ will not be selected by the knockoff filter.  Since the signals are from $L$ by the design of the response $y$, setting $W_S=0$ will not lead to the loss of power.

 In our computation, we use the knockoff package in Matlab to calculate the Lasso path and the FS statistics. This package uses the \textit{glmnet} package in Matlab \cite{glmnet} to solve the Lasso problem and we also use the \textit{glmnet} package to obtain the Lasso statistic. In an earlier example (not included in the current paper) where the features are strongly correlated, we obtained a somewhat unexpected result, i.e. the LP statistic fails to control FDR. To gain some understanding what went wrong, we found that the numerical solution of the Lasso problem $(\hat{\b}, \td{\b}) = \arg \min_{( \hat{b}, \td{b}) } \f{1}{2}  || y-  X \hat{b} -\td{X} \td{b}  ||_2^2 + \lambda  || (\hat{b}, \td{b}) ||_1$ is significantly different from the numerical solution of $(\hat{\b}, \td{\b}) = \arg \min_{ ( \hat{b}, \td{b}) } \f{1}{2}  || y-  \td{X}\td{b}  - X\hat{b}  ||_2^2 + \lambda  || (\hat{b}, \td{b}) ||_1$, which is the same Lasso problem except that we have swapped the order of the input variables $(X, \td{X})$. Therefore, when the features of the design matrix $[X \td{X}]$ are strongly correlated, the result of the Lasso related methods obtained by the \textit{glmnet} package could suffer from a large numerical error, which may lead to the loss of FDR control. 
The problem may be addressed by significantly increasing the precision of the solver, e.g. using the option \textit{thresh} $ = 10^{-12} $. To avoid this problem, we simply exclude $\td{X}_S$ when we run the knockoff selection procedure. 
\end{remark}

\begin{figure}[h]
 \centering
   \includegraphics[width =\textwidth ]{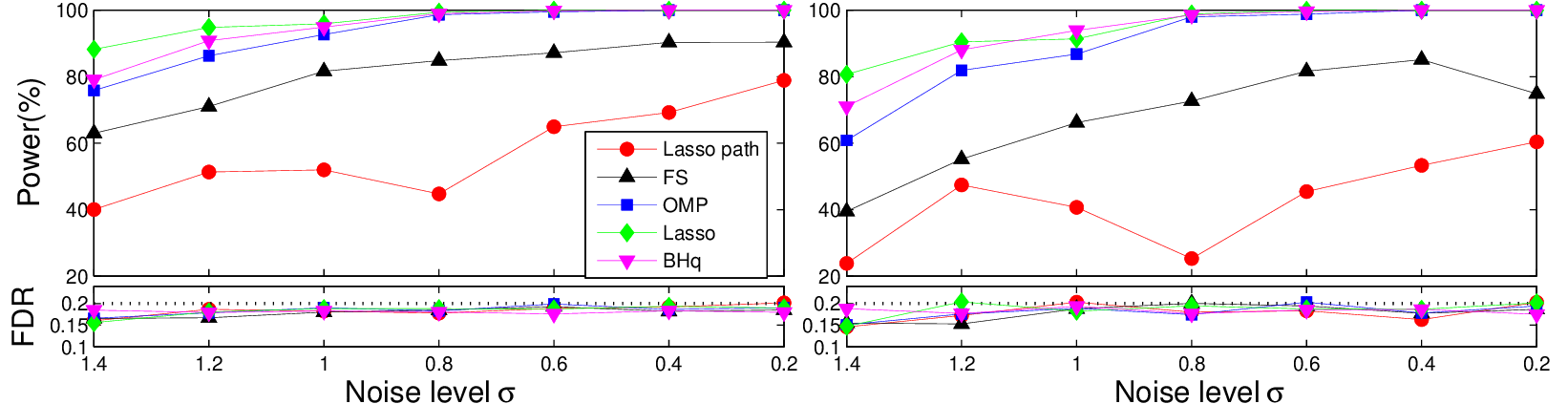}
   \caption{Testing different knockoff statistics using example with positive correlated features and signals with different sign. In the left figure, for each $\sigma$, the power and the FDR are averaged over 5 cases. In the right figure, the minimal power over 5 cases and its associated FDR are presented.
 }
   \label{alter}
 \end{figure}

 As the noise level $\sigma$ decreases, the signal becomes relatively strong and we expect that the power increases. For $\sigma  \leq 1$, the power of the BHq method is over $90\%$ and its FDR is under control, which is a good indicator that the signal is strong enough. For $\sigma =0.2, 0.4$,  the minimal power of the knockoff with the LP ($53 \%, 60\%$) and the FS ($85\%, 75\%$) statistics are significantly less than that of the BHq  method ($100\%, 100\%$), which suggests that these two statistics suffer from 
the alternating sign effect and lose significant power in this example. The power of the knockoff with the OMP or the Lasso statistic is comparable to that of the BHq method. 
 A possible explanation of the robust performance of the OMP and the Lasso statistics in this example is that the OMP statistic can detect strong signals and then eliminate the effect of all selected variables before finding the next strong signal, and the Lasso statistic jointly estimates the effect of all regressors. Compared with the OMP, the forward selection with the first procedure in \eqref{FS_residual} (FS) fails to eliminate the effect of all selected variables, which may lead to the loss of considerable power.
 In strongly correlated cases, the OMP or the Lasso statistic is less likely to suffer from this effect. We have also implemented a similar test where the features are only weakly correlated:  $X  \sim N(0, \Sigma), \Sigma_{ij} = 0.5^{|i-j|}$. The power of the knockoff with the LP and the FS statistics is comparable or more than that of the BHq method. Thus it is unlikely that these statistics would lose power for weakly correlated features.

 \subsection{Extension of the knockoff sufficiency property}\label{ext-sta}

 Let $U\in R^{n \times (n-2p)}$ be an orthonormal matrix such that $[ X \    \tilde{X}]^T U = 0$ and  $ [X \   \tilde{X}  \  U]$ admits a basis of $R^{n}$.
 (\ref{revko1}) implies $(X+\tilde{X})^T ( X -\tilde{X}) = X^T X -\tilde{X}^T \tilde{X} =0$. Hence, $R^n$ can be decomposed as follows
 \[
 R^{n} = span( X+\tilde{X} ) \oplus span(X-\tilde{X}) \oplus span(U) .
 \]
 Our key observation is that swapping each pair of the original $X_j$ and its knockoff $\tilde{X}_j$ does not modify these spaces: $span( X+\tilde{X} )$, $span(X-\tilde{X})$ and $span(U)$. Therefore, the probability distributions of the projections of the response $y$ onto these spaces respectively are independent and invariant after swapping arbitrary pair $X_j, \tilde{X}_j$. Inspired by this observation, we can generalize the sufficiency property of knockoff statistic \cite{Candes} which states that the statistic $W$ depends only on the Gram matrix $[X \tilde{X}]^T[X \tilde{X}]$ and the feature-response product $[ X \tilde{X}]^T y$.
 \begin{definition}[Generalized Sufficiency Property]
 The statistic $W$ is said to obey the generalized sufficiency property if $W$ depends only on the Gram matrix $[X \tilde{X}]^T[X \tilde{X}]$ and the feature-response $[X\  \tilde{X}\ U]^T y$; that is, we can write
 $ W = f( [X \tilde{X}]^T[X \tilde{X}], [X\  \tilde{X}\ U]^T y   )$
 for some $f : S_{2p}^+ \times R^n \rightarrow R^p$ and an orthonormal matrix $U \in R^{n \times (n-2p) }$ that satisfies $U^T [X \tilde{X}] =0$.
 \end{definition}

 The definition of the antisymmetry property remains the same: swapping $X_j $ and $\tilde{X}_j$ has the same effect as changing the sign of $W$, i.e.
 \[
 W_j( [X \tilde{X}]_{swap(\hat{S})}, U, y   ) = W_j( [X \tilde{X}], U, y   )\cdot
 \begin{cases}
 + 1 & j\notin \hat{S},\\
 -1 & j\in \hat{S}, \\
 \end{cases}
 \]
 for any $\hat{S} \subset \{1,2,..,p\}$.  For any knockoff matrix $\tilde{X}$ and the associated statistic $W$ that satisfies the above definition, we call $W$ the generalized knockoff statistic.
 Following the proof of Lemma 1, 2 and 3 in \cite{Candes}, one can verify the pairwise exchangeability for the features and the response.
 \begin{lem}\label{exgkf}
 For any generalized knockoff statistic $W$ and a subset $\hat{S}$ of nulls, we have
 \[
 W_{swap(\hat{S})}  = f( [X \tilde{X}]_{swap(\hat{S})} ^T [X \tilde{X}]_{swap(\hat{S})}, [ \  [X   \tilde{X}]_{swap(\hat{S})} \ U  ]^T y ) \overset{d}{ =}  f( [X \tilde{X}]^T[X \tilde{X}], [X\  \tilde{X}\ U]^T y   ) = W.
\]
 \end{lem}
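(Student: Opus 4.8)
The plan is to reduce the claimed distributional identity to three facts: the Gram matrix is swap-invariant, the map $y \mapsto [X\ \tilde{X}\ U]^T y$ can be rebuilt from three mutually independent Gaussian blocks by a fixed invertible linear map, and the swap acts trivially in distribution on each block. Since $W$ is a deterministic function $f$ of $G \triangleq [X\tilde{X}]^T[X\tilde{X}]$ and $[X\ \tilde{X}\ U]^T y$, establishing these facts and then applying $f$ will give the result. First I would dispose of the Gram matrix: by Lemma 2 (pairwise exchangeability for the features) in \cite{Candes}, which follows directly from the knockoff constraint (\ref{revko1}), swapping any subset of columns leaves $[X\tilde{X}]^T[X\tilde{X}]$ unchanged, so $G_{swap(\hat{S})} = G$ and the first argument of $f$ may be replaced by $G$ for free. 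The entire problem then concerns only the response block.

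Next I would pass to the orthogonal decomposition $R^n = span(X+\tilde{X}) \oplus span(X-\tilde{X}) \oplus span(U)$ noted above, which is genuinely orthogonal because $(X+\tilde{X})^T(X-\tilde{X}) = 0$ by (\ref{orthor-cond-1}) and $U^T[X\tilde{X}] = 0$ by hypothesis. Writing $X^T y = \tfrac12[(X+\tilde{X})^T y + (X-\tilde{X})^T y]$ and $\tilde{X}^T y = \tfrac12[(X+\tilde{X})^T y - (X-\tilde{X})^T y]$, the vector $[X\ \tilde{X}\ U]^T y$ is a fixed invertible linear image, say $M(\cdot)$, of the triple $\big((X+\tilde{X})^T y,\ (X-\tilde{X})^T y,\ U^T y\big)$. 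The crucial structural observation is how the swap acts on this triple: since $X_j+\tilde{X}_j$ is symmetric in $X_j,\tilde{X}_j$ while $X_j-\tilde{X}_j$ merely changes sign, swapping the pairs indexed by $\hat{S}$ leaves the first and third blocks untouched and multiplies the $\hat{S}$-coordinates of the second block by $-1$; that is, the second block becomes $D(X-\tilde{X})^T y$, where $D$ is the diagonal sign matrix equal to $-1$ on $\hat{S}$ and $+1$ elsewhere. The \emph{same} map $M$ reconstructs $[[X\tilde{X}]_{swap(\hat{S})}\ U]^T y$ from the swapped triple, so it suffices to show the swapped triple is equal in distribution to the original.

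Finally I would verify this identity block by block, which is where the null assumption enters. Because $\epsilon \sim N(0,\sigma^2 I)$ and the three subspaces are mutually orthogonal, the three blocks are jointly Gaussian with vanishing cross-covariances (e.g. $(X+\tilde{X})^T \sigma^2 I (X-\tilde{X}) = 0$ and $(X-\tilde{X})^T \sigma^2 I\, U = 0$), hence mutually independent, so it is enough to match each marginal law. The first and third blocks are literally unchanged. For the second, $(X-\tilde{X})^T y \sim N\big(diag(s)\beta,\ 2\sigma^2 diag(s)\big)$ using $(X-\tilde{X})^T X = diag(s)$ and $(X-\tilde{X})^T(X-\tilde{X}) = 2\,diag(s)$. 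Conjugating by $D$ preserves the diagonal covariance, and since $\hat{S}$ consists of nulls the mean obeys $(diag(s)\beta)_j = 0$ for $j \in \hat{S}$, so $D$ also fixes the mean; thus $D(X-\tilde{X})^T y \overset{d}{=} (X-\tilde{X})^T y$. Combining the blocks via independence, applying the fixed map $M$, and feeding the result through $f$ together with $G_{swap(\hat{S})} = G$ yields $W_{swap(\hat{S})} \overset{d}{=} W$.

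The main obstacle is precisely the second block. One must check two things simultaneously: that appending $U^T y$ does not destroy the block independence that the Candès argument relied on — it does not, exactly because $U \perp [X\tilde{X}]$, which is the whole point of the generalization — and that the sign flip encoded by $D$ is invisible in law. The latter hinges on the $\hat{S}$-coordinates of the mean $diag(s)\beta$ vanishing, which is the only place the hypothesis that $\hat{S}$ is a subset of nulls is used; everything else is a rigid linear-algebraic bookkeeping of how swapping permutes coordinates among the three invariant subspaces.
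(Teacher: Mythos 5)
Your proof is correct, and it takes a route that differs from the paper's actual argument in an instructive way. The paper's proof is a direct moment computation on the full augmented vector: it notes the Gram matrix is swap-invariant (your first step, identical), then shows $[\,[X\ \tilde{X}]_{swap(\hat{S})}\ U]^T y \overset{d}{=} [X\ \tilde{X}\ U]^T y$ by matching the mean (using $X_j^T X\beta = \tilde{X}_j^T X\beta$ for null $j$, which holds since $X_j^T X_i = \tilde{X}_j^T X_i$ for $i \neq j$ and $\beta_j = 0$) and the covariance (using $[X\ \tilde{X}]_{swap(\hat{S})}^T U = 0$ and Gram invariance), then invokes Gaussianity. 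You instead formalize the three-subspace decomposition $R^n = span(X+\tilde{X}) \oplus span(X-\tilde{X}) \oplus span(U)$ — which the paper states only as motivating intuition before the definition of the generalized sufficiency property, but never uses in its proof — and reduce the swap to a diagonal sign flip $D$ on the difference block, with the null hypothesis entering only through $(diag(s)\beta)_j = s_j\beta_j = 0$ for $j \in \hat{S}$. The two uses of the null assumption are equivalent, since $(X_j - \tilde{X}_j)^T X\beta = s_j\beta_j$, so the mathematical content coincides; but your version buys a cleaner isolation of where "nulls" matter and makes the mechanism (swap acts trivially on two invariant subspaces, as a sign change on the third) explicit, which is exactly the structure that later underlies the i.i.d.-signs lemma. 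The paper's version buys brevity: one mean computation and one covariance computation, with no need to verify block independence, that the reconstruction map $M$ is the same on both sides, or that $D$ commutes in law with the difference block. Both arguments lean on Gaussianity in the same essential way, so neither is more general than the other.
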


Moreover, we can show that the ``i.i.d. signs for the nulls'' property still holds true for the generalized knockoff statistic.
 \begin{lem}(i.i.d signs for the nulls). Let $\eta \in \{\pm 1 \}^p$ be a sign sequence independent of $W$, with $\eta_j = +1$ for all nonnull $j$ and $\eta_j \overset{i.i.d}{\sim} \{  \pm1\} $ for null $j$. Then $  (W_1,...,W_p ) \overset{d}{=} (W_1 \eta_1, ..., W_p \eta_p  ) $.
 \end{lem}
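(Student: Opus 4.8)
The plan is to obtain the i.i.d.-sign invariance as a direct consequence of the exchangeability property---just established from Lemma~\ref{exgkf} and the antisymmetry property---combined with a short conditioning argument that exploits the independence of $\eta$ and $W$.

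First I would isolate the deterministic building block. Write $\mathcal{N} = \{ j : \beta_j = 0\}$ for the set of null indices and fix any subset $\hat{S} \subseteq \mathcal{N}$. The exchangeability property gives $W_{swap(\hat{S})} \overset{d}{=} W$, and the antisymmetry property lets me rewrite the left-hand side coordinatewise as $\big( W_{swap(\hat{S})}\big)_j = \delta_j^{\hat{S}} W_j$, where $\delta_j^{\hat{S}} = -1$ for $j \in \hat{S}$ and $\delta_j^{\hat{S}} = +1$ otherwise (only null coordinates are ever flipped, since $\hat{S} \subseteq \mathcal{N}$). Combining the two yields, for every fixed $\hat{S} \subseteq \mathcal{N}$,
\[ (\delta_1^{\hat{S}} W_1, \ldots, \delta_p^{\hat{S}} W_p) \overset{d}{=} (W_1, \ldots, W_p). \]
In words: flipping the signs of the $W_j$ on any fixed set of null coordinates leaves the joint law unchanged.

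Next I would upgrade this deterministic statement to the random sign vector $\eta$. Because $\eta_j = +1$ for every nonnull $j$, the random set $\hat{S}(\eta) := \{ j \in \mathcal{N} : \eta_j = -1 \}$ is a subset of $\mathcal{N}$, and $(\eta_1 W_1, \ldots, \eta_p W_p)$ equals $(\delta_1^{\hat{S}(\eta)} W_1, \ldots, \delta_p^{\hat{S}(\eta)} W_p)$. For an arbitrary bounded measurable $g : \mathbb{R}^p \to \mathbb{R}$, conditioning on the value of $\eta$ and using $\eta \perp W$ gives
\[ E\big[ g(\eta_1 W_1, \ldots, \eta_p W_p) \big] = \sum_{\hat{S} \subseteq \mathcal{N}} P\big( \hat{S}(\eta) = \hat{S} \big)\, E\big[ g(\delta_1^{\hat{S}} W_1, \ldots, \delta_p^{\hat{S}} W_p) \big]. \]
By the deterministic identity of the previous step each inner expectation equals $E[g(W_1, \ldots, W_p)]$, so the sum collapses to $E[g(W_1, \ldots, W_p)]$ because the probabilities add to one. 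As $g$ is arbitrary, this proves $(\eta_1 W_1, \ldots, \eta_p W_p) \overset{d}{=} (W_1, \ldots, W_p)$.

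I do not expect a genuine obstacle inside this proof: the substance is carried entirely by exchangeability, which is upstream, and the remaining steps are the antisymmetry sign bookkeeping and a routine averaging over $\eta$ that needs only independence. The one point worth stating carefully is that $\hat{S}$ must be restricted to null indices throughout, so that antisymmetry never touches a nonnull coordinate---this is precisely what forces $\eta_j = +1$ on the nonnulls and is what makes the hypothesis of the lemma match the range of subsets for which exchangeability was proved in Lemma~\ref{exgkf}.
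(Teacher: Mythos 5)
Your proof is correct and follows exactly the route the paper intends: the paper offers no written-out proof of this lemma, asserting only that it follows from Lemma~\ref{exgkf} (pairwise exchangeability of the response) together with the antisymmetry property, and your argument supplies precisely those two ingredients plus the routine conditioning over $\eta$ that Barber and Cand\`es use. One minor note: the paper's displayed antisymmetry definition has its two cases typographically swapped (it reads $+1$ for $j\in\hat{S}$ and $-1$ for $j\notin\hat{S}$); you correctly use the intended convention, under which swapped null coordinates flip sign.
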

Based on these lemmas,  we can apply the same super-martingale as in \cite{Candes} to establish rigorous FDR control.

 \paragraph{Estimate of the noise level}
A natural estimate of the noise level is $\hat{\sigma} = || y - X \hat{\b}^{ls} - \td{X} \td{\b}^{ls} ||_2 / \sqrt{n-2p}$ provided $ n > 2p$, where $\hat{\b}^{ls} ,\td{\b}^{ls}$ are the least squares coefficients. Let  $U \in R^{n \times (n-2p) }$ be an orthonormal matrix such that $ U^T [X \tilde{X}] =0$. It is straightforward to show that
\[ \hat{\sigma} =|| y - X \hat{\b}^{ls} - \td{X} \td{\b}^{ls} ||_2 / \sqrt{n-2p} =  || U^T y||_2 /\sqrt{n-2p} . \]
Note that $\hat{\sigma}$ depends on $U^Ty$ only.  As an application of the generalized knockoff statistic, we can incorporate this noise estimate into the knockoff statistic without violating the FDR control.

\section{Concluding Remarks}

In this paper, we proposed a prototype knockoff filter for group selection by extending Reid-Tibshirani's prototype method. Our prototype knockoff filter improves the computational efficiency and statistical power of the Reid-Tibshirani prototype method when it is applied for group selection. We demonstrated that when the group features are spanned by one or a few hidden factors,
 the PCA prototype filter can offer more power than that of the group knockoff filter. On the other hand, the PCA will not work well if the signals within each group have opposite signs and the signals in each group are canceled almost completely. Due to the improved statistical power and computational efficiency, the prototype knockoff filter with the SDP construction of the knockoff is especially attractive when the equi-correlation construction of the group knockoff gives small $\lambda_{min}$ for certain design matrices. In this case, one may need to use the SDP construction for the group knockoff, which could be expensive.

We have also performed some analysis of the knockoff filter.  Our analysis reveals that certain path method statistics for the knockoff filter may suffer from loss of power for certain design matrices and a specially designed response even if the signal strengths are relatively strong. We provided some partial understanding of this special phenomena and identified the alternating sign effect that could generate this phenomena.
Our numerical results have confirmed that several statistics could lose significant power for certain design matrices and a specially constructed response due to the alternating sign effect.

 \paragraph{Acknowledgments. \\ }

 The research was in part supported by NSF Grants DMS 1318377 and DMS 1613861. The first author's research was conducted during his visit to ACM at Caltech. We are very thankful for Prof. Emmanuel Cand\'es' valuable comments and suggestions to our work. We also thank Prof. Rina Foygel Barber for communicating with us regarding her group knockoff filter and Dr. Lucas Janson for his insightful comments on our PCA prototype filter. We are grateful to the anonymous referees for their valuable comments and suggestions and for pointing out a potential problem in a numerical example in our earlier manuscript using the \textit{glmnet} package in solving the Lasso problem.

 \end{document}